    \newtheorem{theorem}{Theorem}
    \newtheorem{lemma}[theorem]{Lemma}
\newcommand{\maxmincs}{Bottleneck Convex Subsets\xspace}
\newcommand{\ap}{Angle Partition\xspace}
\title{Bottleneck Convex Subsets: Finding $k$ Large\\ Convex Sets in a Point Set\thanks{The work is supported in part by the Natural Sciences and Engineering Research Council of Canada (NSERC). The authors dedicate the work in memory of Saeed Mehrabi.}}
\author[1]{Stephane Durocher}
\author[2]{J. Mark Keil}
\author[3]{Saeed Mehrabi}
\author[2]{Debajyoti Mondal} 
\affil[1]{Department of Computer Science, University of Manitoba, Winnipeg, Canada\\
  \texttt{durocher@cs.umanitoba.ca}}
\affil[2]{Department of Computer Science, University of Saskatchewan, Saskatoon, Canada\\
  \texttt{keil@cs.usask.ca}, \texttt{dmondal@cs.usask.ca}}  
\affil[3]{Carleton University, Ottawa, Canada}
\begin{document}
\maketitle              
\begin{abstract}
Chv\'{a}tal and Klincsek (1980) gave an $O(n^3)$-time algorithm for the problem of finding a maximum-cardinality convex subset of an arbitrary given set $P$ of $n$ points in the plane. This paper examines a generalization of the problem, the {\em \maxmincs} problem: given a set $P$ of $n$ points in the plane and a positive integer $k$, select $k$ pairwise disjoint convex subsets of $P$ such that the cardinality of the smallest subset is maximized. Equivalently, a solution maximizes the cardinality of $k$ mutually disjoint convex subsets of $P$ of equal cardinality. We show the problem is NP-hard when $k$ is an arbitrary input parameter, we give an algorithm that solves the problem exactly, with running time polynomial in $n$ when $k$ is fixed, and we give a fixed-parameter tractable algorithm parameterized in terms of the number of points strictly interior to the convex hull. 
\end{abstract}

\section{Introduction}

A set $P$ of points in the plane is {\em convex} if for every $p \in P$ there exists a closed half-plane $H^+$ such that $H^+ \cap P = \{p\}$. Determining whether a given set $P$ of $n$ points in the plane is convex requires $\Theta(n\log n)$ time in the worst case, corresponding to the time required to determine whether the convex hull of $P$ has $n$ vertices on its boundary \cite{kirkpatrick1986}. Chv\'{a}tal and Klincsek \cite{chvatal1980} gave an $O(n^3)$-time and $O(n^2)$-space  algorithm to find a maximum-cardinality convex subset of any given set $P$ of $n$ points in the plane. Later, Edelsbrunner and Guibas~\cite{DBLP:journals/jcss/EdelsbrunnerG89} improved the space complexity to $O(n)$. 
In this paper, we examine a generalization of the problems to multiple convex subsets of $P$. Given a set $P$ of points in the plane and a positive integer $k$, we examine the problem of finding $k$ convex and mutually disjoint subsets of $P$, such that the cardinality of the smallest set is maximized.
We define the problem formally, as follows.

\begin{enumerate}   
\item[]{\bf BOTTLENECK CONVEX SUBSETS}

\item[]{\bf Instance:} A set $P$ of $n$ points in $\mathbb{R}^2$, and a positive integer $k$.

\item[]{\bf Problem:} Select $k$ sets $P_1, \ldots, P_k$ such that 
\begin{itemize}
    \item $\forall i \in \{1,\ldots k\},$ $P_i \subseteq P$,
    \item $\forall i \in \{1,\ldots k\},\ P_i$ is convex,
    \item $\forall \{i,j\} \subseteq \{1,\ldots k\}, \ i \neq j \Rightarrow P_i \cap P_j = \varnothing,$ and
    \item $\displaystyle\min_{i \in \{1, \ldots, k\}} |P_i|$ is maximized.
\end{itemize}
\end{enumerate}
\begin{figure}[pt]
\centering
\includegraphics[width=.9\textwidth]{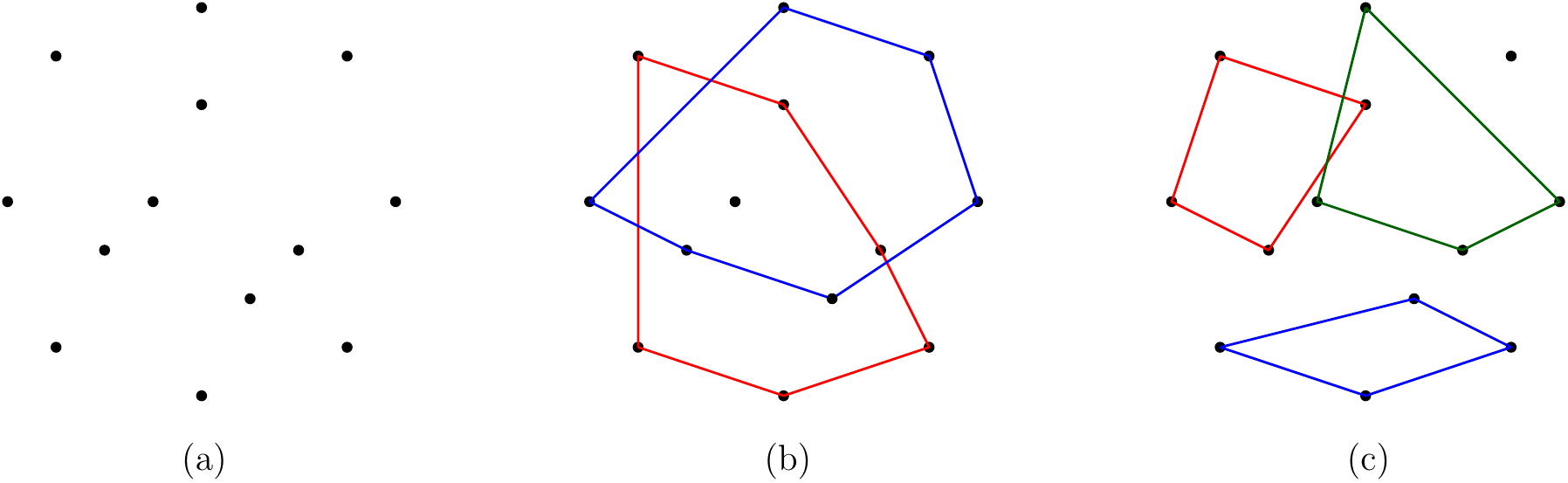}
\caption{(a) A point set $P$. (b) A solution to the \maxmincs problem when $k=2$. (c) A solution when $k = 3$. }
\end{figure}
Since every subset of a convex set of points remains convex, any $k$ convex sets can be made to have equal cardinality by removing points from any set whose cardinality exceeds that of the smallest set. Therefore, an equivalent problem is to find $k$ mutually disjoint convex subsets of $P$ of equal cardinality, where the cardinality is maximized. 

\subsection{Our Contributions}
In this paper we examine the problem of finding 
$k$ large convex subsets of a given point set with $n$ points. Our contributions are as follows:

\begin{enumerate}
    \item We give a polynomial-time algorithm that solves \maxmincs for any fixed $k$. The algorithm constructs a directed acyclic graph $G$ whose vertices correspond to distinct configurations of edges passing though vertical slabs between neighbouring points of $P$. A solution to the problem is found by identifying a node in $G$ associated with a maximum-cardinality set that is reachable from the source node.  
    \item Using a reduction from a restricted version of Numerical 3-Dimensional Matching, which is known to be NP-complete, we show that \maxmincs is NP-hard when $k$ is an arbitrary input parameter.
    \item We show that \maxmincs is fixed-parameter tractable when parameterized by the number of  points that are strictly interior to the convex hull of the given point set, i.e., the number of non-extreme points. Therefore, if the number of points interior to the convex hull is fixed, then for every $k$, \maxmincs can be solved in polynomial time.
\end{enumerate}

\subsection{Related Work}
\label{sec:related}

A \emph{convex $k$-gon} is a convex set with $k$ points. A convex \emph{$k$-hole} within a set $P$ is a convex $k$-gon on a subset of $P$ whose convex hull is empty of any other points of $P$. A rich body of research examines convex $k$-holes in point sets~\cite{Morris}.  
By the Erd\H{o}s-Szekeres theorem~\cite{erdos}, every point set with $n$ points in the Euclidean plan contains a convex $k$-gon for some $k \in \Omega(\log n)$.  Urabe~\cite{DBLP:journals/dam/Urabe96} showed that by repeatedly extracting such a convex $\Omega(\log n)$-gon, one can partition a point set into $O(n/\log n)$ convex subsets, each of size $O(\log n)$. 

Given a set $P$ of $n$ points in the plane, there exist $O(n^3)$-time algorithms to compute a largest convex subset of $P$~\cite{chvatal1980,DBLP:journals/jcss/EdelsbrunnerG89} and a largest empty convex subset of $P$~\cite{DBLP:conf/compgeom/AvisR85}. Both problems are NP-hard in $\mathbb{R}^3$~\cite{DBLP:conf/esa/GiannopoulosKW13}. In fact, finding a largest empty convex subset  is W[1]-hard in $\mathbb{R}^3$~\cite{DBLP:conf/esa/GiannopoulosKW13}.  Gonz\'{a}lez-Aguilar et al.~\cite{DBLP:conf/fct/Gonzalez-Aguilar19} have recently examined the problem of finding a largest convex set in the rectilinear setting.

The \emph{convex cover number} of a point set $P$ is the minimum number of disjoint convex sets that covers $P$. The \emph{convex partition number} of a point set $P$ is the minimum number of convex sets with disjoint convex hulls (in addition to their vertex sets being pairwise vertex disjoint) that covers $P$. Urabe~\cite{DBLP:journals/dam/Urabe96} examined lower and upper bounds on the convex cover number and the convex partition number. He showed that the convex cover number of a set of $n$ points in $\mathbb{R}^2$ is in $\Theta(n/\log n)$ and its convex partition number is bounded from above $\lceil \frac{2n}{7}\rceil$. Furthermore, there exist point sets with convex partition number at least $\lceil \frac{n-1}{4}\rceil$.

Arkin et al.~\cite{DBLP:journals/corr/cs-CG-0210003}  proved that both finding the convex cover number and the convex partition number of a point set are NP-hard problems, and gave a polynomial-time $O(\log n)$-approximation algorithm for both problems. Although the \maxmincs problem appears to be similar to the convex cover number problem as both problems attempt to find disjoint convex sets, the objective functions are different.  Neither the NP-hardness proof nor the approximation result for convex cover number~\cite{DBLP:journals/corr/cs-CG-0210003} readily extends to the \maxmincs problem. Previous work has also considered partitioning a point set  into empty convex sets, where the convex hulls of the sets do not contain any interior point.  For the number of empty convex point sets, an upper bound of 
$\lceil \frac{9n}{34}\rceil$ and a lower bound of $\lceil \frac{n+1}{4}\rceil$ is known~\cite{DBLP:conf/jcdcg/DingHUX02}. We refer the readers to~\cite{DBLP:journals/dcg/EppsteinE94,DBLP:journals/dcg/EppsteinORW92} for related problems on finding convex sets with various optimization criteria.

Another related problem in this context is to partition a given point set using a minimum number of lines (Point-Line-Cover), which Megiddo and Tamir~\cite{DBLP:journals/orl/MegiddoT82}  showed to be NP-hard, and was subsquently shown to  be APX-hard~\cite{DBLP:conf/cccg/BrodenHN01,DBLP:conf/icalp/KumarAR00}. 
Point-Line-Cover is known to be fixed-parameter tractable when parameterized on the number of lines. 
Whether the minimum convex cover problem is fixed-parameter tractable remains an open problem~\cite{Eppstein}. Note that for any fixed $k$, one can decide whether the minimum convex cover number of a point set is at most $k$ in polynomial time~\cite{DBLP:journals/corr/cs-CG-0210003}. 


Previous work on the Ramsey-remainder problem provides insight into the 
 \maxmincs problem~\cite{DBLP:journals/ejc/ErdosTV96}. Given an integer $i$, the Ramsey-remainder is the smallest integer $rr(i)$ such that for every sufficiently large point set, all but $rr(i)$ points can be partitioned into convex sets of size at least $i$. Therefore, a \maxmincs problem with sufficiently large $n$ and with $k\le \lfloor \frac{n - rr(k)}{k} \rfloor$ must have a solution where the size of the smallest convex set is at least $k$. Note that the \maxmincs problem is straightforward to solve for the case when $k \ge n/3$, i.e., one needs to compute a balanced partition without worrying about the convexity of the sets. However, the case when  $k = n/4$ already becomes nontrivial.  K\'{a}rolyi~\cite{DBLP:journals/dam/Karolyi01} derived a necessary and sufficient condition for a set of $4n$ points in general position to admit a partition into $n$ convex quadrilaterals, and gave an $O(n \log n)$-time algorithm to decide whether such a partition exists. 

\section{A Polynomial-Time Algorithm for a Fixed $k$}
Given a set $P$ of $n$ points in the plane and a fixed integer $k$, we describe an $O(k n^{5k+3})$-time algorithm that solves \maxmincs for any fixed $k$. The idea is to construct a directed acyclic graph $G$ whose vertices each correspond to a vertical slab of the plane in a given state with respect to the selected subsets $P_1, \ldots, P_k$ of $P$, with an edge from one slab to the slab immediately to its right if the states of the two neighbouring slabs form a locally mutually compatible solution. A feasible solution ($P_1, \ldots, P_k$ are mutually disjoint convex subsets of $P$) corresponds to a directed path starting at the root node in $G$, i.e., a sequence of consecutive compatible slabs. Among the feasible solutions, an optimal solution ($\min_{i \in \{1,\ldots, k\}} |P_i|$ is maximized) corresponds to a path that ends at a node for which the cardinality of the smallest set is maximized.  

Rotate $P$ such that no two of its points lie on a common vertical line. 
Partition the plane into $n-1$ vertical slabs, $S_1, \ldots, S_{n-1}$, determined by the $n$ vertical lines through points of $P$. Let $L$ be the set of ${n \choose 2}$ line segments whose endpoints are pairs of points in $P$.
Within each slab, $S_i$, consider the set of line segments $L_i = \{l \cap S_i \mid l \in L\}$. A convex point set corresponds to the vertices of a convex polygon; in a feasible solution, $j$ convex polygons intersect $S_i$ for some $j \in \{0, \ldots, k\}$. Each of these polygons has a top segment and a bottom segment in $L_i$. There are at most ${|L_i| \choose 2}$ possible choices of segments in $L_i$ for the first polygon, ${|L_i|-2 \choose 2}$ for the second polygon, $\ldots$, and ${|L_i|-2(j-1) \choose 2}$ for the $j$th polygon, giving
$\prod_{x=0}^{j-1} {|L_i|-2x \choose 2} \in O(|L_i|^{2j})= O(n^{4j})$ possible combinations of edges in $S_i$ for a given $j \in \{0, \ldots, k\}$.

We construct an unweighted directed acyclic graph $G$. Each vertex in $V(G)$ corresponds to a slab $S_i$, a $j \in \{0, \ldots, k\}$, and a top edge and a bottom edge for each of the $j$ convex polygons that intersect $S_i$. Consequently, the number of vertices in $G$ is $O(\sum_{i=1}^{n-1} \sum_{j=0}^k n^{4j}) = O(kn^{4k+1})$. 

Furthermore, we create $(n/k)^k$ copies of each vertex associated with a slab $S_i$, each of which is assigned a distinct value $(\ell_1, \ldots, \ell_k) \in \mathbb{Z}^k$, where for each $j\in \{1, \ldots, k\}$, $\ell_j = | P_j \cap (S_1 \cup \cdots \cup S_i)|$, i.e., the number of points of $P_j$ that lie in the first $i$ slabs. We refer to $\ell = \min_{j \in \{1, \ldots, k\}}\ell_j$ as the vertex's {\em level}. Each vertex at level $\ell$ in $G$ corresponds to a slab $S_i$, such that the minimum cardinality of any polygon in $S_1 \cup \ldots \cup S_i$ (or partial polygon if it includes points to the right of $S_i$) is $\ell$. Therefore, the resulting graph $G$ has $O((n/k)^k kn^{4k+1}) \subseteq O(\frac{1}{k^{k-1}}\cdot n^{5k+1})$ vertices. See Figure~\ref{fig:graphStructure}. 

\begin{figure}[pt]
\centering
\includegraphics[width=0.5\textwidth]{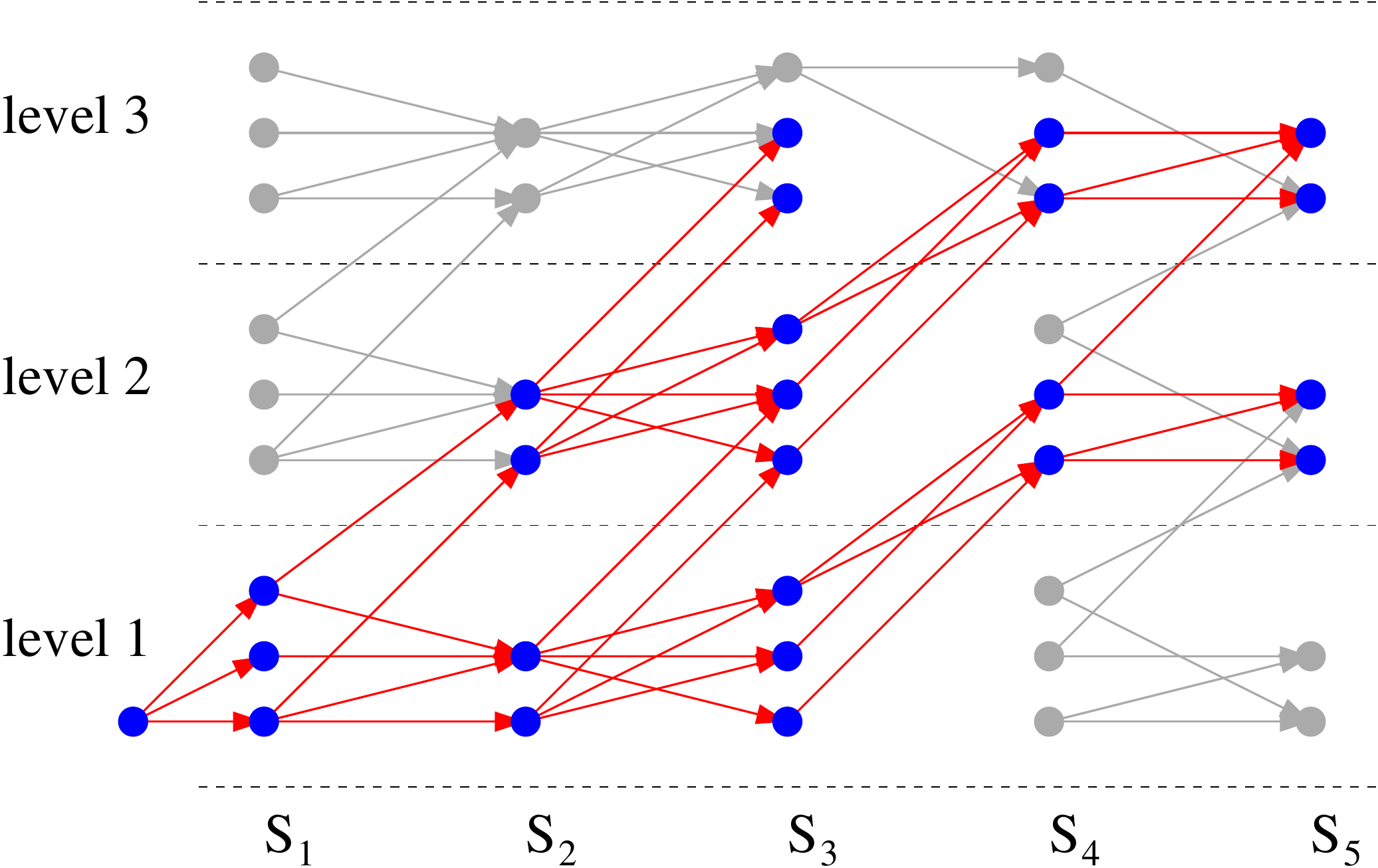}
\caption{Each slab $S_i$ has various combinations of pairs of edges possible, each of which corresponds to a vertex in $G$, which is copied at levels 1 through $n/k$. Directed edges are added from a vertex associated with slab $S_i$ to a vertex associated with a compatible slab $S_{i+1}$. The edge remains at the same level if the cardinality of the smallest set in $S_1 \cup \cdots \cup S_{i+1}$ remains unchanged; the level of $S_{i+1}$ is one greater than the level of $S_i$ if the cardinality of the smallest set in $S_1 \cup \cdots \cup S_{i+1}$ increases. Some vertices cannot be reached by any path from any source node at level 1 in slab $S_1$; these vertices and their out-edges are shaded gray. A feasible solution corresponds to a path rooted at a source node associated with the slab $S_1$ on level 1. An optimal solution ends at a sink node at the highest level among all feasible solutions.}
\label{fig:graphStructure}
\end{figure}

Every slab has exactly one point of $P$ on its left boundary and one on its right boundary. For each vertex $v$ in $G$, let $v_l$ and $v_r$ denote these two points of $P$ for the slab corresponding to $v$. We add an edge from vertex $u$ to vertex $v$ in $G$ if they are {\em compatible}. See Figure~\ref{fig:graphCases}. The vertices  $u$ and $v$ are compatible if:
\begin{itemize}
    \item $u$ and $v$ correspond to neighbouring slabs, $u$ to $S_i$ and $v$ to $S_{i+1}$, for some $i$, and
    \item all top and bottom segments associated with  $u$ that do not pass through $p_i$ continue in $v$, where $p_i = u_r = v_l$ is the point of $P$ on the common boundary of $S_i$ and $S_{i+1}$, and
    \item one of the four following conditions is met:
    \begin{enumerate}
        \item[Case 1.] 
    either (a) one top associated with $u$ ends at $p_i$ and one top associated with $v$ begins at $p_i$, forming a right turn at $p_i$, or (b) one bottom associated with $u$ ends at $p_i$ and  one bottom associated with $v$  begins at $p_i$, forming a left turn at $p_i$ (all polygons in $S_i$ continue in $S_{i+1}$; the number of edges in $S_i$ is equal to that in $S_{i+1}$); 
        \item[Case 2.]
        one top and one bottom associated with $u$ end at $p_i$, (one polygon ends in $S_i$ and all remaining polygons continue into $S_{i+1}$);
        \item[Case 3.]
    no top or bottom associated with  $u$ end at $p_i$, but one top and one bottom associated with $v$ start at $p_i$ (one polygon starts in $S_{i+1}$ and all remaining polygons continue from $S_i$ into $S_{i+1}$).
       \item[Case 4.]
    all edges in $u$ continue into $v$ and no edge passes through $p_i = u_r = v_l$ (all polygons in $S_i$ continue into $S_{i+1}$; the number of edges in $S_i$ is equal to that in $S_{i+1}$).
\end{enumerate}
\end{itemize}

\begin{figure}[pt]
\centering
\includegraphics[width=0.24\textwidth]{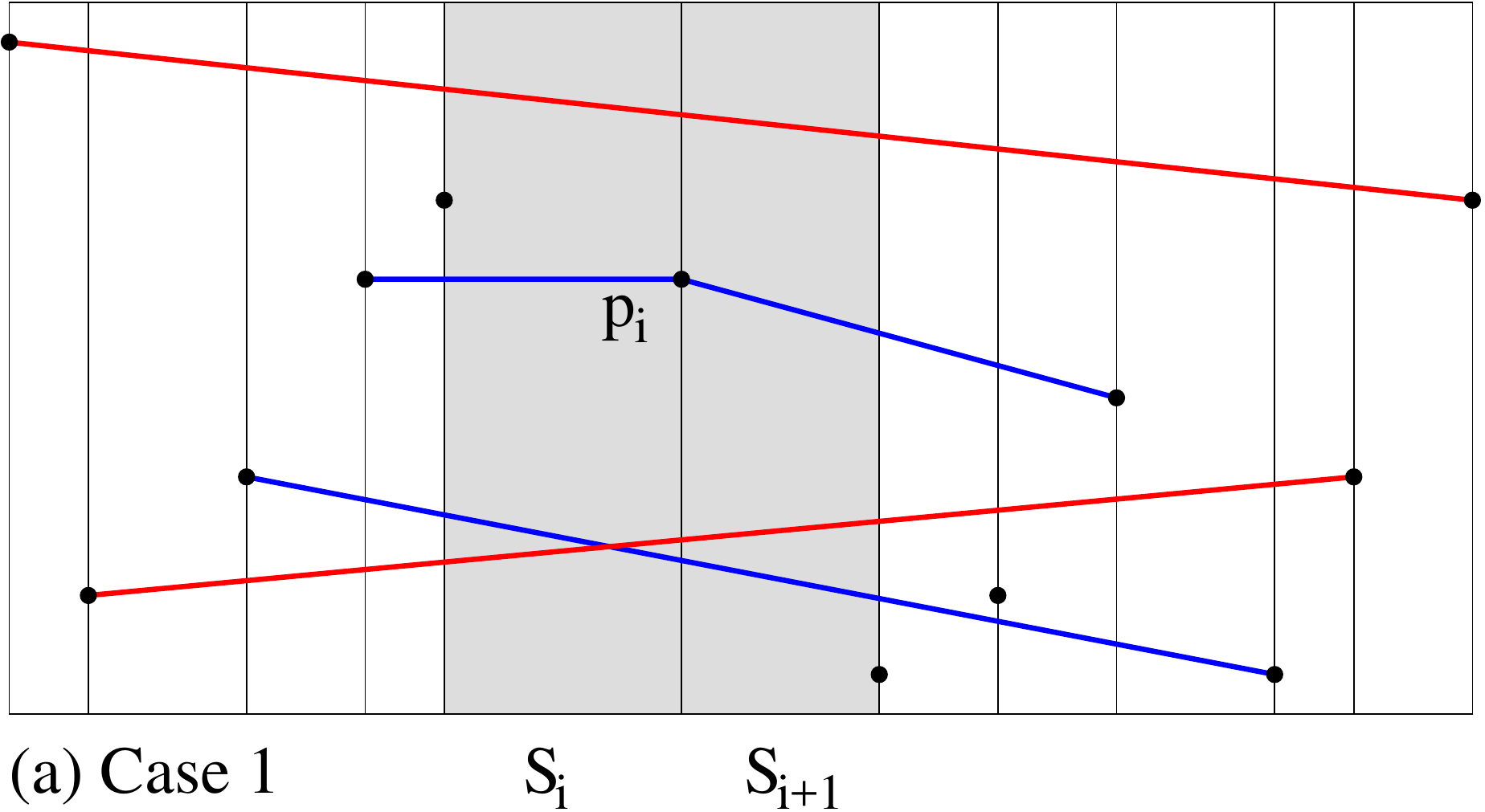}
\hfill
\includegraphics[width=0.24\textwidth]{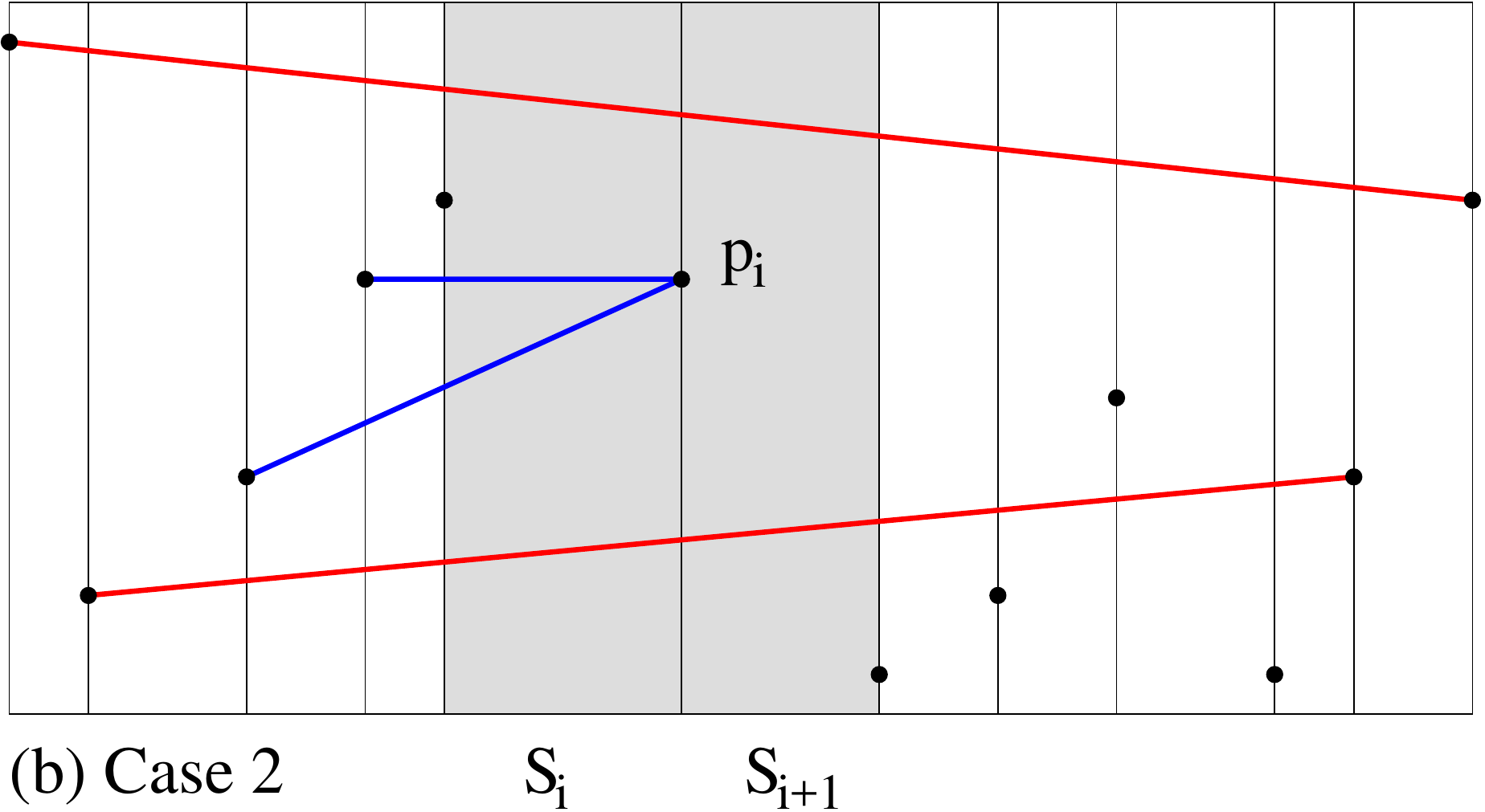}
\hfill
\includegraphics[width=0.24\textwidth]{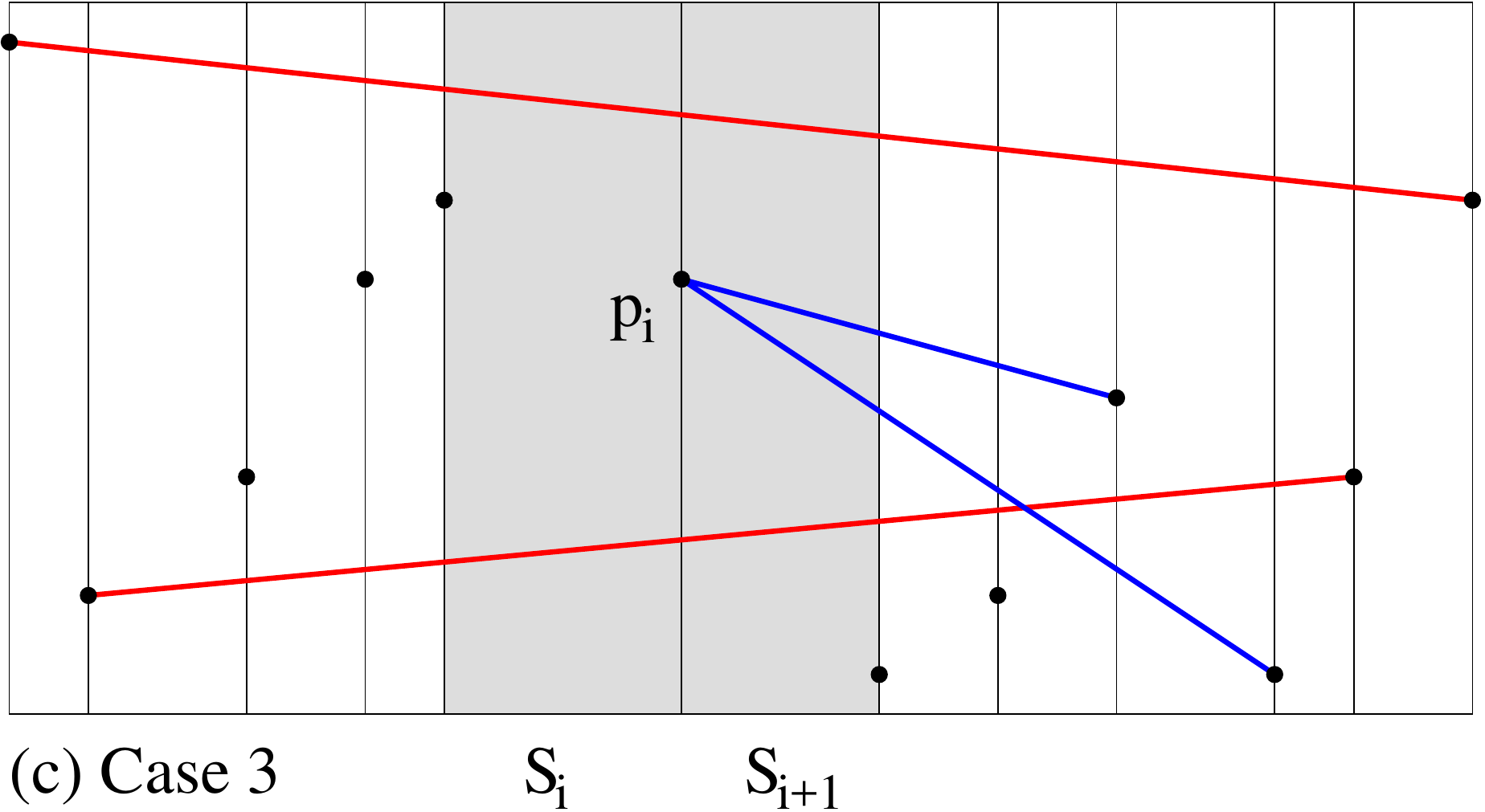}
\hfill
\includegraphics[width=0.24\textwidth]{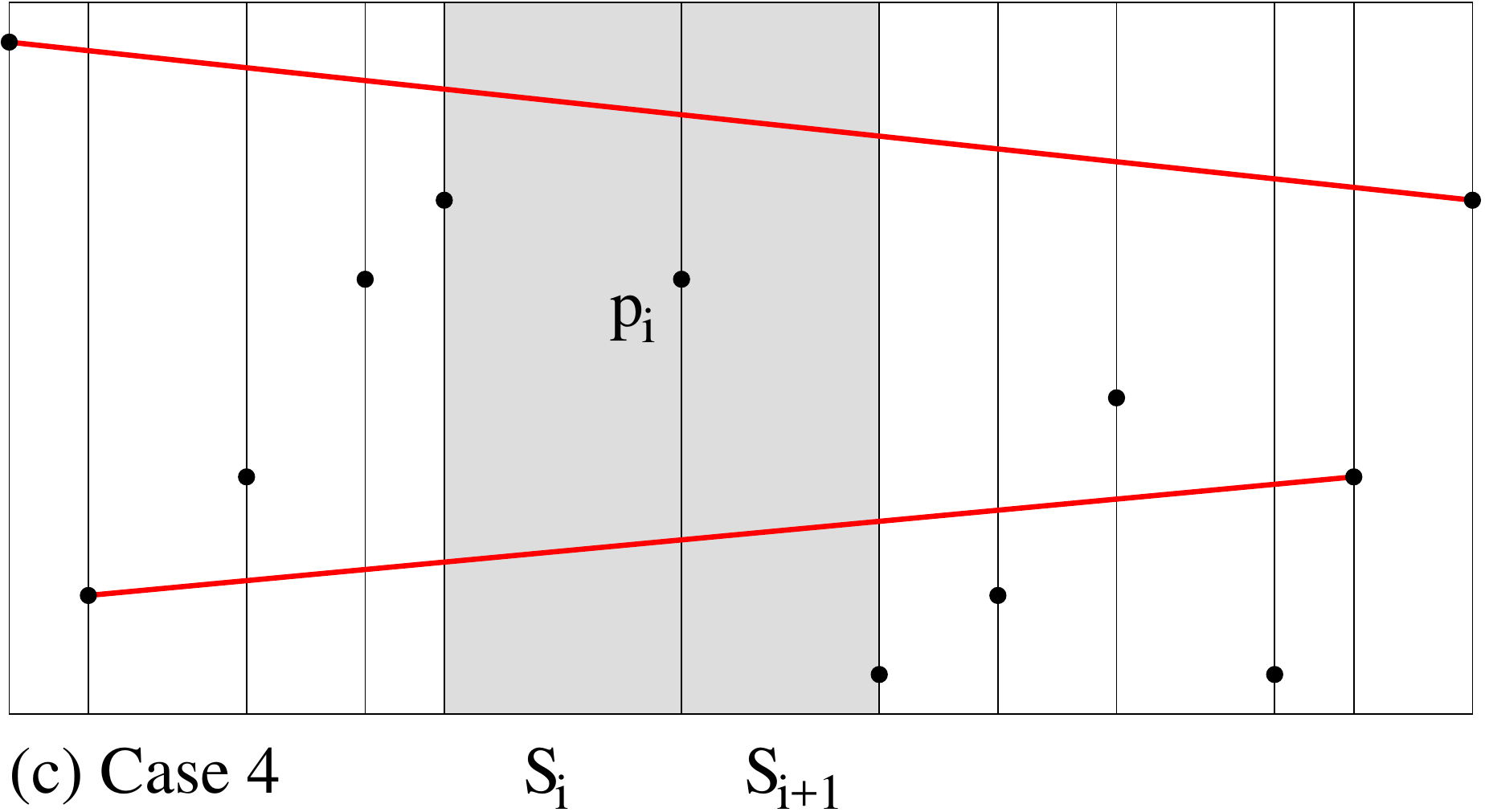}
\caption{The four cases in which we add an edge between the vertices $u$ (associated with the slab $S_i$) and $v$ (associated with the slab $S_{i+1}$) in $G$; i.e., $u$ and $v$ are {\em compatible}. In this example, $k=2$, corresponding to two polygons, for which the edges through $S_i$ and $S_{i+1}$ are coloured blue and red, respectively. In Figure~\ref{fig:graphCases}(a), $p_i$ lies on the upper hull of the blue polygon, so the polygon makes a right turn at $p_i$, i.e., the angle below $p_i$ must be convex. Figure~\ref{fig:graphCases}(d), $p_i$ is omitted from the selection.}
\label{fig:graphCases}
\end{figure}

For a given vertex $u$ at most $n-2$ edges satisfy Case~1 (there are at most $n-2$ possible edges that continue from $p_i$ to form a convex bend), at most one edge satisfies Case~2, at most ${n-3 \choose 2}$ edges satisfy Case~3, and at most one edge satisfies Case~4. Consequently, the number of edges in $G$ is $O(n^2 |V(G)|) \subseteq O(\frac{1}{k^{k-1}}\cdot n^{5k+3})$. 

Any path from a source on level 1 to a highest-level node corresponds to an optimal solution, and can be found using breadth-first search in time proportional to the number of edges in $G$. The resulting worst-case running time is proportional to the number of vertices and edges in $G$: $O(|V(G)| + |E(G)|) = O(\frac{1}{k^{k-1}}\cdot n^{5k+3})$. 
In addition to storing a single in-neighbour from which a longest path reaches each node $u$, we can maintain a list of all of its in-neighbours that give a longest path, allowing the algorithm to reconstruct all distinct optimal solutions with the running time increased only by the output size. 

The time for constructing the graph $G$ is proportional to its number of edges. The combinations of ${n \choose 2j}$ line segments in a slab $S_i$ on level $j$ can be enumerated and created in $O(1)$ time each, with $O(1)$ time per edge added if graph vertices are indexed according to their slab, their level, and the line segments they include. The level of each node in $G$ is determined in $O(1)$ time per node by examining the level of any of its in-neighbours; the level increases by one in Cases~1 and~2 if the point $p_i$ is added to the minimum-cardinality set and that set is the unique minimum. 

\begin{theorem}
Given a set $P$ of $n$ points in the plane, and a positive integer $k$, \maxmincs can be solved exactly in $O(\frac{1}{k^{k-1}}\cdot n^{5k+3})$ 
time. 
\end{theorem}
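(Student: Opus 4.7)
The plan is to establish two things: that the constructed DAG $G$ encodes exactly the feasible solutions to \maxmincs via source-to-sink directed paths in a level-respecting way, and that the vertex and edge counts given in the preceding discussion combine with a standard DAG longest-path computation to yield the claimed $O(k^{-(k-1)} n^{5k+3})$ running time.

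For the forward direction (solutions induce paths), I would fix disjoint convex subsets $P_1,\ldots,P_k$ and describe the vertex of $G$ they induce in each slab $S_i$: for every $P_j$ whose convex hull crosses $S_i$, record the upper and lower hull segments that cross $S_i$, together with the prefix-count vector $(\ell_1,\ldots,\ell_k)$. Walking from $S_i$ to $S_{i+1}$, the only local change happens at the shared point $p_i$, and inspecting the convex polygon of each $P_j$ shows that $p_i$ either lies on an upper or lower hull (Case~1), closes a polygon (Case~2), opens a new polygon (Case~3), or is unused (Case~4). Since at most one coordinate of $(\ell_1,\ldots,\ell_k)$ is incremented across $p_i$, the level updates exactly as in $G$, so the induced sequence of vertices forms a directed path and the minimum coordinate at the sink equals $\min_j |P_j|$.

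For the reverse direction (paths decode to solutions), I would read off, for each polygon index $j$, the maximal run of slabs in which a $j$-th top/bottom pair is present, and form $P_j$ as the set of endpoints along that run. Convexity of each $P_j$ follows from the right-turn condition at each Case~1 transition on the upper hull and the symmetric left-turn condition on the lower hull, together with the Case~2 and Case~3 endpoint conditions. The disjointness requirement is the subtle part and the main obstacle of the proof: one must argue that no point of $P$ appears in two distinct $P_j$'s along the path. I would handle this by observing that at any single vertex the $2j$ stored segments are pairwise distinct (by the product count), and that at a shared-boundary point $p_i$ the four cases allow $p_i$ to be absorbed into at most one polygon, so a point added to $P_{j_1}$ at slab boundary $p_i$ cannot simultaneously appear as an endpoint of any other $P_{j_2}$.

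The complexity bound then reduces to a routine calculation. The number of pair-configurations in a slab for a fixed $j$ is $\prod_{x=0}^{j-1}\binom{|L_i|-2x}{2} = O(n^{4j})$, so summing over $j\in\{0,\ldots,k\}$ and over the $n-1$ slabs gives $O(kn^{4k+1})$ unlevelled configurations; multiplying by the $(n/k)^k$ possible level vectors yields $|V(G)| = O(k^{-(k-1)} n^{5k+1})$. Each vertex admits $O(n^2)$ out-edges across the four cases, so $|E(G)| = O(k^{-(k-1)} n^{5k+3})$. Building $G$ on the fly and performing a breadth-first longest-path search each cost $O(|V(G)|+|E(G)|)$ time, which matches the stated bound; the optimal solution is recovered by following back-pointers from the highest-level reachable sink.
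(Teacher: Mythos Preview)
Your proposal is correct and follows essentially the same approach as the paper: the same slab-based DAG $G$, the same four compatibility cases at each shared boundary point $p_i$, the same level vectors $(\ell_1,\ldots,\ell_k)$, and the same vertex/edge counts $|V(G)|=O(k^{-(k-1)}n^{5k+1})$ and $|E(G)|=O(k^{-(k-1)}n^{5k+3})$ combined with a linear-time search on the DAG. You spell out the forward/reverse correctness and the disjointness argument more explicitly than the paper (which treats these points informally), but the algorithmic content and complexity analysis are identical.
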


\section{NP-Hardness}
In this section we show that \maxmincs is NP-hard. 
We first introduce some notation.   
Let $x(p),y(p)$ be the $x$ and $y$-coordinates of a point $p$. An angle $\angle pqr$ determined by points  $p, q$ and $r$ is called a \emph{$y$-monotone angle}  if $y(p)>y(q)>y(r)$, as illustrated in Figure~\ref{fig:angle}. A $y$-monotone angle is \emph{left-facing} (resp. \emph{right-facing}) if the point $q$ lies interior to the left (resp., right) half-plane of the line through $pr$. If $q$ lies on the line through $pr$, then we refer to $\angle pqr$ as a \emph{straight angle}.  

The idea of the hardness proof is as follows. We first prove that given a set of $3n$ points in the Euclidean plane, it is NP-hard to determine whether the points can be partitioned into 
$n$  $y$-monotone angles, where none of them are right facing (Section~\ref{sec:angle}). We then reduce this problem to \maxmincs (Section~\ref{sec:hard}).

\begin{figure}[h]
\centering
\includegraphics[width=.6\linewidth]{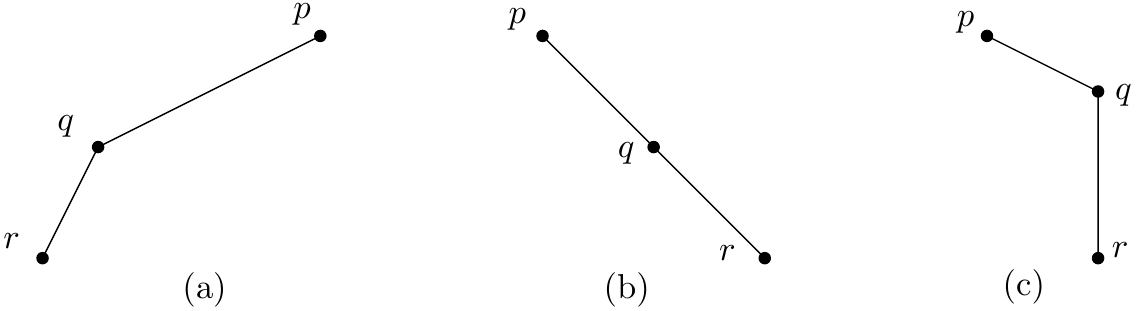}
\caption{Illustration for different types of $y$-monotone angles: (a) a left-facing angle, (b) a straight angle, and (c) a right-facing angle.}
\label{fig:angle}
\end{figure}

\subsection{Covering Points by Straight or Left-Facing Angles}
\label{sec:angle}
In this section we show that given a set of $3n$ points in the Euclidean plane, it is NP-hard to determine whether the points can be partitioned into $n$ $y$-monotone angles, where none of them are right facing. In fact, we prove the problem to be NP-hard in a restricted setting, 
as follows:

\begin{enumerate}   
\item[]{\bf ANGLE PARTITION}

\item[]{\bf Instance:} A set $P$ of $3n$ points lying on three parallel horizontal lines ($y=0, y=1$ and $y=2$) in the plane, where  each line contains exactly $n$ points.

\item[]{\bf Problem:} Partition $P$ into at most $n$ $y$-monotone angles, where none of them are right facing.
\end{enumerate}

We reduce Distinct 3-Numerical Matching with Target Sums (DNMTS), which is known to be strongly NP-complete~\cite[Corollary 8]{DBLP:journals/orl/HulettWW08}. 

\begin{enumerate}   
\item[]{\bf DISTINCT NUMERICAL MATCHING WITH TARGET SUM}

\item[]{\bf Instance:} Three sets $A=\{a_1,\ldots,a_n\},B=\{b_1,\ldots,b_n\},C=\{c_1,\ldots,c_n\}$, each with $n$ distinct positive integers, where $\sum_{i=1}^{n}a_i + \sum_{i=1}^{n}b_i = \sum_{i=1}^{n}c_i$.

\item[]{\bf Problem:} Decide whether there exist $n$ triples $(a_i,b_j,c_k)$, where $1\le i,j,k\le n$, such that $a_i+b_j =c_k$ and no two triples share an element.
\end{enumerate}


\begin{theorem}
\label{thm:ap}
\ap is NP-hard.
\end{theorem}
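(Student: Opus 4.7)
The plan is to reduce from DNMTS, encoding the arithmetic relation $a_i + b_j = c_k$ as collinearity of three points (one on each horizontal line). Given a DNMTS instance $(A,B,C)$ with $\sum_i a_i + \sum_j b_j = \sum_k c_k$, I would place $n$ bottom points $A_i = (-2a_i, 0)$, $n$ middle points $B_j = (b_j, 1)$, and $n$ top points $C_k = (2c_k, 2)$. Because DNMTS is strongly NP-complete, all integers are polynomially bounded and the reduction is polynomial in the input size.

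Since every $y$-monotone angle uses three points with strictly decreasing $y$-coordinates and $P$ occupies only three horizontal lines, any partition of $P$ into at most $n$ $y$-monotone angles must consist of exactly $n$ triples, each comprising one top, one middle, and one bottom point. For a triple $(C_k, B_j, A_i)$, the middle point lies on the line through the other two iff $2b_j = 2c_k - 2a_i$, i.e., $c_k = a_i + b_j$. Thus a DNMTS solution yields an AP partition whose angles are all straight (hence non-right-facing), giving the easy direction.

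For the converse I would use a global sum argument. Define $\Phi(C_k, B_j, A_i) = x_{C_k} + x_{A_i} - 2 x_{B_j}$ for each angle in the partition. Each top, middle, and bottom point appears in exactly one triple, so
\[
\sum_{\text{angles}} \Phi \;=\; 2\sum_k c_k - 2\sum_i a_i - 2\sum_j b_j \;=\; 0
\]
by the DNMTS target-sum constraint. A short case analysis using Figure~\ref{fig:angle} and the definition of right-facing shows that ``non-right-facing'' is equivalent to a single one-sided inequality on $\Phi$ (specifically $\Phi \le 0$, since $\Phi>0$ places $B_j$ strictly in the right half-plane of the top-to-bottom line $C_kA_i$), which holds uniformly across all triples. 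A partition with no right-facing angle therefore forces every $\Phi \le 0$; combined with $\sum \Phi = 0$, each $\Phi$ must equal $0$, so every angle is straight, and reading $c_k = a_i + b_j$ off each triple recovers a DNMTS solution.

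The main obstacle is the orientation bookkeeping: verifying carefully that ``non-right-facing'' is really a uniformly single-signed condition on $\Phi$ for every triple, so that the sum-zero collapse simultaneously drives all $\Phi$-values to zero. Once that sign is confirmed from the paper's definition, both directions of the equivalence close cleanly, and the polynomial-time reduction together with strong NP-completeness of DNMTS establishes NP-hardness of \ap.
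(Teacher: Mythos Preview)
Your proposal is correct and follows essentially the same route as the paper: encode $a+b=c$ as collinearity of one point from each of the three horizontal lines, and use the global target-sum constraint to force every non-right-facing angle to be straight. The paper's coordinates are $(a,0)$, $(b,2)$, $(c/2,1)$ rather than your $(-2a,0)$, $(b,1)$, $(2c,2)$, but the mechanism is identical. One small slip in your orientation bookkeeping: with the paper's convention, $\Phi>0$ means $x_{B_j}<\tfrac{1}{2}(x_{C_k}+x_{A_i})$, so $B_j$ lies to the \emph{left} of the line $C_kA_i$ and the angle is left-facing; thus non-right-facing is $\Phi\ge 0$, not $\Phi\le 0$---but since $\sum\Phi=0$ the collapse to all $\Phi=0$ works either way, exactly as you anticipated.
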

\begin{proof}
Let $M = (X,Y,Z)$ be an instance of DNMTS, where each set $A,B,C$ contains $n$ positive integers. We now construct an instance $Q$ of \ap as follows: (I)  For each $a\in A$, create a point at $(a,0)$. (II) For each $b\in B$, create a point at $(b,2)$. (III) For each $c\in C$, create a point at $(c/2,1)$.


This completes the construction of the point set $P$ of the \ap  instance $Q$ (e.g., see Figure~\ref{fig:ap}(a)). Since the numbers in $A,B,C$ are distinct, no two points in $P$ will coincide. Note that by definition, a $y$-monotone angle must contain one point from each of the lines $y=0, y=1$ and $y=2$. Furthermore, every straight angle $\angle pqr$ will satisfy the equation $\frac{x(p)+x(r)}{2}  = x(q)$. This transformation is inspired by a 3-SUM hardness proof for a  geometric problem known as `GeomBase'~\cite{DBLP:journals/comgeo/GajentaanO12}. 

We now show that $M$ has an affirmative solution if and only if $P$ admits a partition into $n$ $y$-monotone angles where none of them are right facing. 

First consider that $M$ has an affirmative answer, i.e., a set of $n$ triples $(a_i,b_j,c_k)$, where $1\le i,j,k\le n$, such that $a_i+b_j =c_k$ and no two triples share an element. Therefore, we will have $\frac{(a_i+b_j)}{2}  = \frac{c_k}{2}$. Hence we will find a straight line through $(a_i,0), (b_k,2), (c_j/2,1)$. These lines will form $n$ $y$-monotone straight angles (e.g., see Figure~\ref{fig:ap}(b)). Since none of these angles  are right facing, this provides an affirmative solution for the instance $Q$.

Consider now the case when $Q$ has an affirmative solution $T$, i.e., a partition of $P$ into $n$ $y$-monotone angles, where none of them are right facing.  We first claim that (Step 1) all these $n$  $y$-monotone angles must be straight angles and then (Step 2) show how to construct an affirmative solution for $M$. 

\begin{figure}[pt]
\centering
\includegraphics[width=\linewidth]{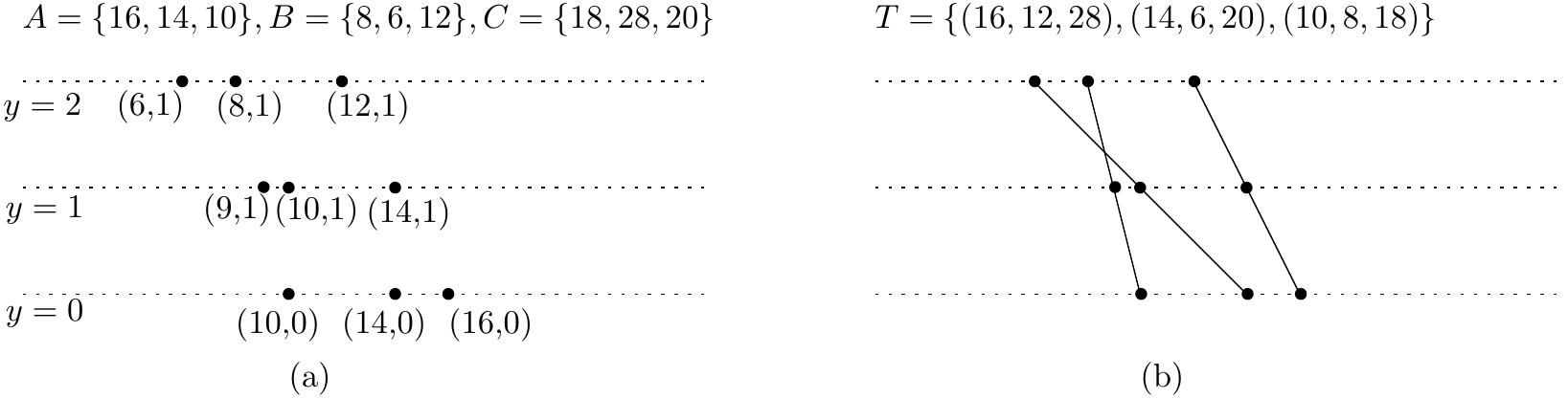}
\caption{(a) Construction of $Q$ from an instance $M$ of DNMTS. (b) A solution for $M$ and the corresponding angles of $Q$.}
\label{fig:ap}
\end{figure}

{\it Step 1:} Suppose for a contradiction that the solution $T$ contains one or more  left-facing angles.  For each left-facing angle  $\angle rst$, where $r,s,t$ are on lines $y=0,y=1$ and $y=2$, respectively,  we have $x(s) < \frac{x(r)+x(t)}{2}$. For each straight angle $\angle rst$, we have  $x(s) = \frac{x(r)+x(t)}{2}$. Since we do not have any right-facing angle, the following inequality holds: 
$\sum_{\angle rst \in T} x(s) < \sum_{\angle rst \in T}\frac{x(r)}{2}  + \sum_{\angle rst \in T} \frac{x(t)}{2}$.  
Since no two angles share a point, we have $\sum_{i=1}^{n} (c_i/2) < \sum_{i=1}^{n} (a_i/2) + \sum_{i=1}^{n} (b_i/2)  $, which contradicts that $M$ is an affirmative instance of DNMTS.

{\it Step 2:} We now transform the $y$-monotone straight angles of $T$ into $n$ triples for $M$. For each angle, $\angle rst$, where $r,s,t$ are on lines $y=0,y=1$ and $y=2$, we construct a triple $(x(r), x(t), 2x(s))$. Since $\angle rst$ is a straight angle,  $x(r)+ x(t) = 2x(s)$. Since no two angles share a point, the triples will be disjoint.  
\end{proof}

\subsection{\maxmincs is NP-Hard}
\label{sec:hard}
In this section we reduce \ap to \maxmincs.  Let $P$ be an instance of \ap, i.e., three lines $y=0,y=1$ and $y=2$, each line containing $n$ points. We construct an instance $H$ of \maxmincs with $k=n$.

\textbf{Construction of $H$:} We first take a copy $P'$ of the points of $P$ and include those in $H$. Let $\Delta$ be a sufficiently large number (to be determined later). We now construct $n$ \emph{upper chains}. The $i$th upper chain $U_i$, where  $1\le i\le n$, is constructed following the step below (see Figure~\ref{fig:kp}).
\begin{enumerate}
    \item[] \textit{Construction of $U_i$:} Place two points at the coordinates  $(i\Delta,\Delta^2+3)$ and $((i+1)\Delta,3)$. Let $C$ be the curve determined by $y=\Delta^2+3-(x-i\Delta)^2$, which passes through these two points. 
    Place $2n$ points uniformly on $C$ between $(i\Delta,\Delta^2+3)$ and $((i+1)\Delta,3)$.
\end{enumerate}
Each upper chain contains $(2n+2)$ points. We define the $n$ lower chains symmetrically, where each lower chain $V_i$ starts at $(i\Delta,-\Delta^2-1)$ and ends at $((i+1)\Delta,-1)$. 

We now choose the parameter $\Delta$. Let $t$ be the maximum $x$-coordinate of the points in $P$, and set  $\Delta$ to be $t^4$. This ensures that for any line $\ell$ with non-zero slope  passing through two points of $P$,  the upper and lower chains lie on the right half-plane of $\ell$. 

\begin{figure}[pt]
\centering
\includegraphics[width=.9\linewidth]{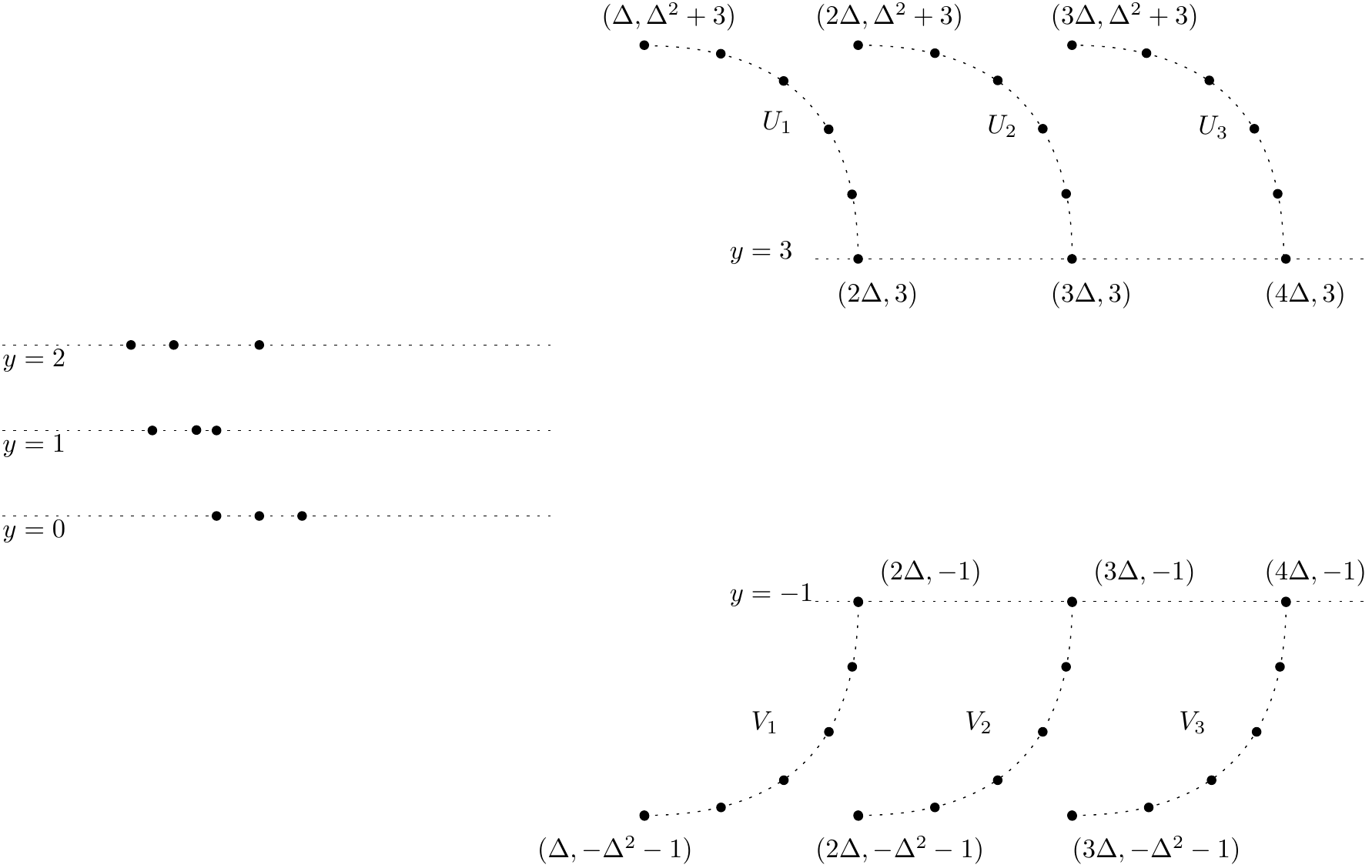} 
\caption{Illustration for the construction of $H$. Note that this is only a schematic representation, which violates the property that all the chains are inside the wedge determined by the $y$-monotone angles.}
\label{fig:kp}
\end{figure}

This concludes the construction of the \maxmincs instance $H$, where $k=n$. Note that $H$ has $3n+n(4n+4) = n(4n + 7)$ points. In the best possible scenario, one may expect to cover all the points and have a partition into $n$ disjoint convex subsets, where each set contains  $(4n+7)$ points. 

\begin{lemma}
\label{lem:both}
 Let $W$ be a partition of the upper and lower chains into  a set $L$ of at most $n$ disjoint convex sets. Then each convex set in $L$  contains at least one point from an upper chain and one point from a lower chain.
\end{lemma}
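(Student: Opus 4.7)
The plan is to argue by contradiction via a counting argument. I will suppose some convex set $S_0 \in L$ contains only upper-chain points; the case of only lower-chain points is symmetric, since reflection of the construction across the horizontal line $y=1$ maps each upper chain $U_i$ onto the corresponding lower chain $V_i$. The strategy is to establish a small upper bound on the size of any pure-upper convex subset of chain points, a modestly larger bound on general convex subsets, and then combine them against the total of $2n(2n+2) = 4n^2+4n$ chain points.

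First I will show that any convex subset of pure upper-chain points has size at most $2n+4$. The key observations are: (i) the top-left endpoints $(i\Delta,\Delta^2+3)$ of the $n$ upper chains are collinear on the line $y = \Delta^2+3$, and the bottom-right endpoints $((i+1)\Delta,3)$ are collinear on $y = 3$, so any convex subset contains at most two extreme representatives from each collinear set (the leftmost and rightmost chosen); and (ii) since each upper-chain parabola is concave down, the interior parabola points of a chain whose index lies strictly between the indices of two other used chains are dominated from above by the horizontal segment joining the neighbours' top-left endpoints, and hence cannot be extreme. Only the rightmost used chain's full parabola can therefore lie on the upper boundary, so at most one used chain contributes its $2n$ interior parabola vertices, yielding the bound $2 + 2n + 2 = 2n+4$.

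Next I will bound the size of any convex subset of the full chain-point set by $4n+6$ via a parallel analysis applied to the upper and lower boundaries of the hull separately. The upper boundary of such a hull can have at most $2n+3$ vertices (one top-left extreme from the leftmost used upper chain, plus up to $2n+2$ vertices of the parabola of the rightmost used upper chain); symmetrically the lower boundary contributes at most $2n+3$ vertices from the lower chains; and the left and right connecting edges introduce no further vertices. The canonical maximizer is $U_i \cup U_j \cup V_i \cup V_j$ for two aligned index pairs, which achieves exactly $4n+6$ extremes.

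With these two bounds in hand, the counting argument finishes the proof: if $|S_0| \leq 2n+4$ and each of the remaining $n-1$ sets has size at most $4n+6$, then the partition covers at most $(2n+4) + (n-1)(4n+6) = 4n^2 + 4n - 2$ points, strictly less than $4n^2+4n$, contradicting the fact that the partition covers every chain point. Hence $S_0$ must contain a lower-chain point; by symmetry it must also contain an upper-chain point, proving the lemma. The main obstacle will be to rigorously establish the two size bounds; in particular, one must verify that no clever subset construction (e.g., deleting a ``blocking'' collinear top endpoint to expose the interior parabola points of another chain as extreme) can beat them, which amounts to showing that each such modification removes at least as many extreme vertices from the dominating chains as it gains.
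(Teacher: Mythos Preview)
Your counting approach is a genuinely different route from the paper's, and it is worth contrasting the two. The paper argues as follows: if some set in $L$ were, say, pure-lower, then the remaining at most $n-1$ sets would cover all $n$ upper chains; but any three points of a single upper chain form a right-facing $y$-monotone angle, and a convex set cannot have three points from each of two different upper chains (since that would force two disjoint right-facing $y$-monotone pieces on the right boundary). Hence each upper chain, having $2n+2 > 2(n-1)$ points, forces by pigeonhole some set to take at least three of its points, and that set is then uniquely assigned to that chain. This yields $n$ distinct sets, a contradiction. The whole proof rests on one qualitative structural fact and a pigeonhole assignment; no size bounds are needed.

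Your route, by contrast, hinges on two \emph{tight quantitative} bounds, and the margins are razor-thin: your arithmetic gives $4n^2+4n-2$ versus the total $4n^2+4n$, so if either bound were even slightly larger (pure-upper $\le 2n+6$, or mixed $\le 4n+7$) the counting would already fail for $n\ge 3$. That puts real pressure on the justification, and the sketches you give have gaps. For bound~1, observation~(ii) about domination presumes the neighbouring chains contribute their \emph{top-left endpoints}; but a convex subset might use only interior points of several chains, in which case the horizontal dominating segment you invoke is not part of the set and the argument does not apply as stated. Your conclusion that ``only the rightmost used chain's full parabola can lie on the upper boundary'' also conflates the upper hull of the convex subset with the upper chains themselves: a pure-upper subset still has both an upper and a lower hull, and interior parabola points could in principle sit on either. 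For bound~2 the same issues recur. These gaps may well be repairable---in fact the paper's ``no two chains contribute three points'' observation would help you here---but as written, the proposal has not established the bounds it needs, and you correctly flag this as the main obstacle. The paper's argument simply sidesteps the difficulty.
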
 
\begin{proof} 
Suppose for a contradiction that we have a convex set that contains points from the same  type of chains, without loss of generality, from lower chains. Then we could delete all the points on the lower chain to obtain a convex set partition for the upper chains with fewer than $n$ disjoint convex sets. To reach the contradiction, we now  show that the upper chains cannot be covered with fewer than $n$ disjoint convex sets. 
 
Every three points of an upper chain forms a right-facing $y$-monotone angle. Since a convex set cannot have two such right-facing angles, no convex set can take three or more points from two different upper chains. 
 Since an upper chain $U$ contains $(2n+2)$ points, at least one convex set $C$ must contain at least 3  or more points from this set. We assign $C$ to $U$ and repeat this process for the other upper chains. Since $C$ cannot contain 3 points from any other upper chain, $C$ will not be assigned to any  upper chain except for $U$. Since each upper chain is assigned a unique convex set, we must have at least $n$ convex sets.
\end{proof}

\textbf{Reduction:} We now show that the \ap instance $P$ admits an affirmative solution if and only if the \maxmincs instance $H$ admits   $k(=n)$ disjoint convex sets with each set containing $(4n+7)$ points.   

Assume first that $P$ admits an affirmative solution, i.e., $P$ admits a set of $n$ $y$-monotone angles such that none of these are right facing. By the construction of $H$, the corresponding point set $P'$ must have such a partition into $y$-monotone angles. For each $i$ from 1 to $n$, we now form a point set $C_i$ that contains the $i$th $y$-monotone angle, the upper chain $U_i$ and the lower chain $V_i$.    Figure~\ref{fig:kp3}   illustrates such a scenario. 
By the construction of $H$, all the chains are inside the wedge determined by the $y$-monotone angle  and hence $C_i$ is a convex set with $(4n+7)$ points. Since the sets are disjoint, we obtain the required solution to the \maxmincs instance.

\begin{figure}[h]
\centering
\includegraphics[width=.9\linewidth]{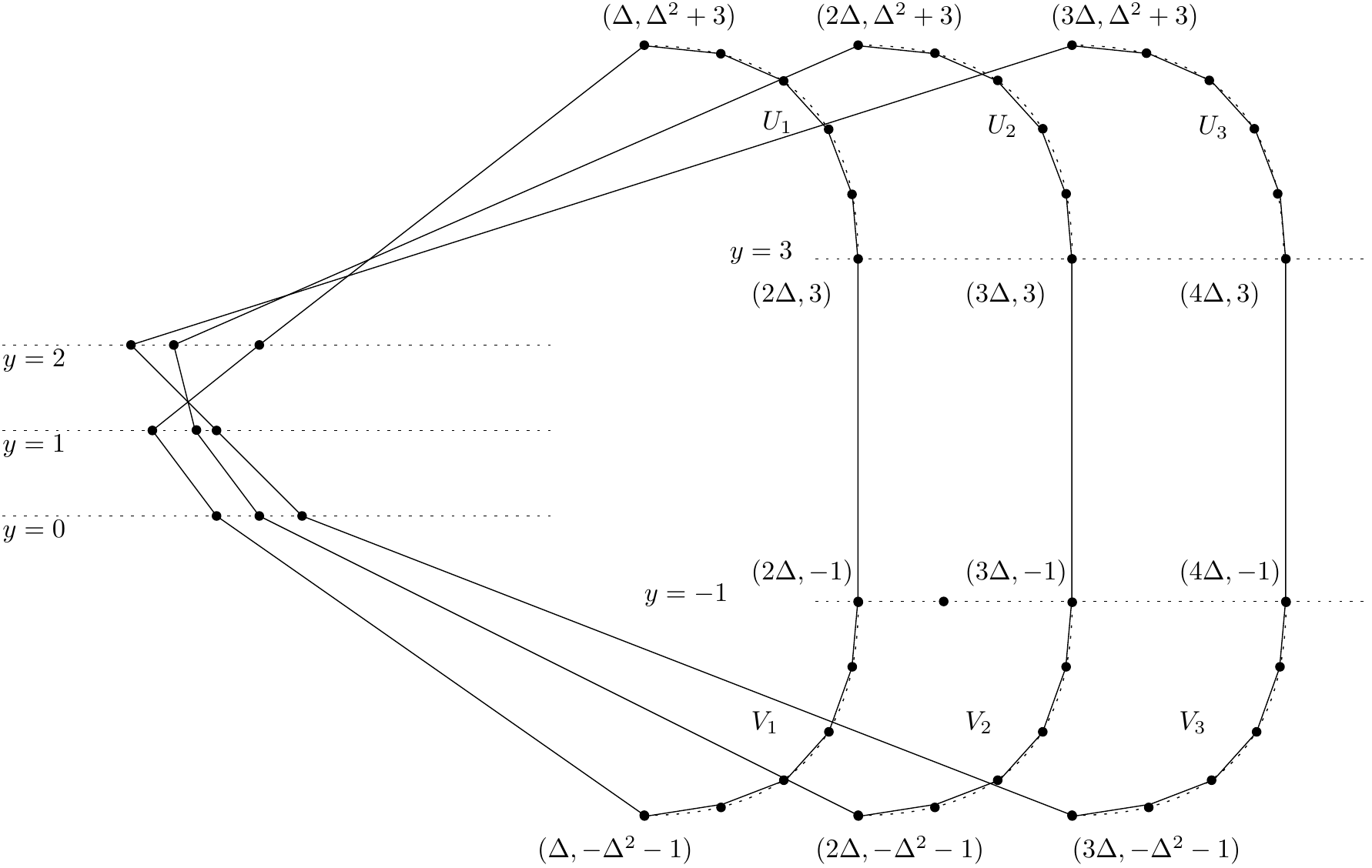}
\caption{A schematic representation for the construction of a convex partition for $H$ from an angle partition of $P$.}
\label{fig:kp3}
\end{figure}
Consider now that the points of $H$ admits   $n$ disjoint convex sets with each set containing $(4n+7)$ points. Since $H$ contains $n(4n+7)$ points, the convex sets form a partition of $H$.  Let $L$ be such a partition. We now show how to construct a solution for $P$ using $L$. Let $L'$ be a set of convex sets obtained by removing the points of $P'$ from each convex set of $L$. By Lemma~\ref{lem:both}, each set of $L'$ contains at least one point from the upper chains and one point from the lower chains.  
Since there are $3n$ points on $P'$, to partition $P'$ into $n$ convex sets, we must need each convex set of $L$ to contain 
a $y$-monotone angle with exactly one point from $y=0$, one point from $y=1$ and one point from $y=2$. Since each convex set contains one point from an upper chain  and one point from a lower chain, none of these $y$-monotone angles can be right facing. Hence we obtain a partition of $P'$ into the required $y$-monotone angles, which implies a partition also for $P$. This completes the reduction. 
The following theorem summarizes the results of this section.  
\begin{theorem}
The \maxmincs problem is NP-hard.
\end{theorem}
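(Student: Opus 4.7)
The plan is to package the polynomial-time reduction from \ap, shown NP-hard in Theorem~\ref{thm:ap}, to \maxmincs. Given an \ap instance $P$, the construction of $H$ with $k=n$ from Section~\ref{sec:hard} produces $n(4n+7)$ points whose coordinates are bounded by a polynomial in the input (since $\Delta = t^4$ and \ap is strongly NP-hard), so the transformation runs in polynomial time.

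For the forward direction, I would check that given $n$ non-right-facing $y$-monotone angles partitioning $P'$, pairing the $i$th angle with the chains $U_i$ and $V_i$ yields a convex set $C_i$ of exactly $4n+7$ points. The key geometric observation, secured by $\Delta = t^4$, is that every line of nonzero slope through two points of $P'$ has the upper and lower chains entirely in its right half-plane; together with the convexity of each individual chain, this places $U_i \cup V_i$ inside the wedge of the angle, making $C_i$ convex.

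For the backward direction, the equality $|H| = n(4n+7)$ forces any $n$ disjoint convex sets of size $4n+7$ to form a partition of $H$. Lemma~\ref{lem:both} guarantees that each such set meets both an upper and a lower chain; counting the $3n$ points of $P'$ across $n$ sets then forces each set to contain exactly one point from each of the three horizontal lines of $P'$, i.e., a $y$-monotone angle. The chain points lying strictly on both the upper and the lower side of the angle rule out the right-facing configuration, producing an affirmative \ap solution.

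The main subtlety I anticipate is the convexity check in the forward direction: once the chains lie on the correct side of every $P'$-line, one must still verify that the $2n+2$ points of each chain, joined to the three points of the associated angle, admit a single convex polygon. I expect this to reduce routinely to the parabolic design of the chains and the strict wedge containment afforded by the choice of $\Delta$, completing the reduction and hence the theorem.
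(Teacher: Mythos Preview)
Your proposal is correct and follows essentially the same route as the paper: the same reduction from \ap via the construction of $H$ with $k=n$, the same forward argument pairing the $i$th angle with $U_i\cup V_i$ using the wedge containment guaranteed by $\Delta=t^4$, and the same backward argument combining the cardinality count with Lemma~\ref{lem:both} to force each convex set to carry a non-right-facing $y$-monotone angle. The subtlety you flag about verifying convexity of $C_i$ is exactly the point the paper dispatches with the wedge-containment claim, so your plan aligns with the paper's proof.
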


\section{Point Sets with Few Points inside the Convex Hull}
In this section we show that the \maxmincs problem is fixed-parameter tractable when parameterized by the number of points $r$ inside the convex hull, i.e., these points do not lie on the convex-hull boundary. 

\begin{theorem}
\label{thm:fpt}
Let $P$ be a set of $n$ points and let $r$ be the number of points interior to the convex hull of $P$. Then one can solve  the \maxmincs problem on $P$ in $f(r)\cdot n^{O(1)}$ time, i.e., the \maxmincs problem is fixed-parameter tractable when parameterized by $r$.  
\end{theorem}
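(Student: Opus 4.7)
My plan is to leverage the small parameter $r$ by enumerating the combinatorial ``interior signature'' of an optimal solution---how the $r$ interior points are distributed among the $k$ subsets---and then, for each signature, solving a structured subproblem on the hull boundary in polynomial time.

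Let $H$ denote the $n-r$ convex-hull points of $P$ and $I$ the $r$ interior points. Because every point of $H$ is extreme in any subset of $P$ containing it, the convexity constraint on a subset $S\subseteq P$ reduces to requiring each interior point of $S$ to be extreme in $S$. Any candidate solution $P_1,\ldots,P_k$ induces a partition of $I$ into at most $r+1$ blocks (including an optional ``not used'' block), and the number of such partitions is bounded by $B_{r+1}$, a function of $r$ alone. My first step is to enumerate all such partitions $(G_1,\ldots,G_{r'})$ of the used interior points, discarding any with $r'>k$ or with some $G_j$ not in convex position. For each valid partition, the residual task is to find pairwise disjoint hull-point sets $H_1,\ldots,H_{r'}\subseteq H$ with each $G_j\cup H_j$ convex, and to partition $H\setminus\bigcup_j H_j$ into $k-r'$ pure hull subsets (automatically in convex position), so as to maximise $\min_i|P_i|$. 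Since the pure hull subsets carry only a cardinality constraint, I would binary search on the target minimum size $\ell$ and reduce to a feasibility problem: for each $j$, $|H_j|\ge\max(0,\ell-|G_j|)$ with $H_j$ compatible with $G_j$, and at least $(k-r')\ell$ hull points left unassigned.

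The geometric core is the feasibility test. The condition ``$G_j\cup H_j$ is convex'' is monotone in $H_j$ and depends only on the relative position of $G_j$ with respect to the hull boundary of $P$. The $O(r^2)$ lines through pairs of interior points partition the hull boundary of $P$ into $O(r^2)$ arcs, within each of which hull points are interchangeable for the compatibility relation of every candidate $G_j$. My plan is then to guess, for each $G_j$, a bounded-complexity ``pattern'' specifying which hull arcs feed into which outer edge region of $\mathrm{conv}(G_j)$; the number of patterns per group is a function of $r$ only. Given fixed patterns for all groups, the feasibility test becomes a packing problem with $O(r^2)$ arc-based integer variables, $O(r)$ ``mixed subset'' cardinality constraints, and a single ``pure pool'' cardinality constraint, solvable in polynomial time---for example by integer programming in fixed dimension via Lenstra's algorithm, or by a direct dynamic-programming sweep along the hull boundary.

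The main obstacle I anticipate is showing that this pattern enumeration captures all feasible configurations, and in particular that the global condition ``no vertex of $G_j$ becomes interior to $\mathrm{conv}(G_j\cup H_j)$'' decomposes into per-arc inequalities once a pattern is fixed; a naive per-point check is insufficient, since several individually compatible hull points can jointly surround a vertex of $G_j$. The cleanest route is a structural lemma asserting that in any convex $G_j\cup H_j$, the hull points $H_j$ appear around $\mathrm{conv}(G_j)$ as a sequence of arcs whose combinatorial type with respect to the $O(r^2)$ critical lines is determined by the pattern, so that the joint geometric condition is equivalent to a finite collection of arc-count inequalities prescribed by the pattern. With this lemma in hand, the total running time across all $f(r)$ interior partitions and pattern tuples is $f(r)\cdot n^{O(1)}$, establishing fixed-parameter tractability in $r$.
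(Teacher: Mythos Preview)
Your outline mirrors the paper's strategy: enumerate how the $r$ interior points are distributed among the $k$ subsets (with a half-plane choice for collinear groups), binary search on the target size $q$, and reduce feasibility to an assignment of hull points. The paper, however, skips your pattern/ILP layer entirely and uses a single bipartite max-flow, joining group $i$ to hull point $w$ whenever $C_i\cup\{w\}$ stays in convex position, with supplies $q-|C_i|$ and unit sinks. Your joint-convexity worry is well taken---indeed the paper's flow argument as written does not address it either: with $G_j=\{p_1,p_2,p_3\}$ a triangle, hull points $w_1$ beyond edge $p_1p_2$ and $w_2$ beyond edge $p_2p_3$ can each keep $G_j$ in convex position individually, yet satisfy $p_2\in\mathrm{conv}\{p_1,w_1,w_2\}$, so the flow may certify an infeasible $q$.

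But your proposed fix is also incomplete. The structural lemma you lean on---that once a ``pattern'' of arc-to-edge-region assignments is fixed, joint convexity becomes a system of arc-count inequalities---is not proved, and in the stated form it fails for the same reason: whether an interior vertex $p_i$ remains extreme in $G_j\cup H_j$ depends on the angular positions, as seen from $p_i$, of the specific hull points that become its neighbours in the combined cyclic order, and those positions are not determined by arc membership alone. Arcs cut by lines through interior-point pairs make \emph{single}-point compatibility arc-invariant, as you observe, but they do not make \emph{joint} compatibility count-invariant. To complete the argument you would still need to show that for each fixed pattern one may, independently across arcs and groups, pick extremal hull points so that all interior vertices stay extreme---or else enrich the enumeration by data finer than arcs while keeping its size bounded by a function of $r$ only. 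Neither step is carried out in your proposal.
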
 
\begin{proof} 
Let $k$ be the number of disjoint convex sets that we need to construct. We guess the cardinality of the smallest convex set in an optimal solution and perform a binary search. 

For a guess $q$, we check whether there exists $k$ disjoint convex sets each with  $q$ points as follows. 

 \begin{figure}[h]
\centering
\includegraphics[width=.9\textwidth]{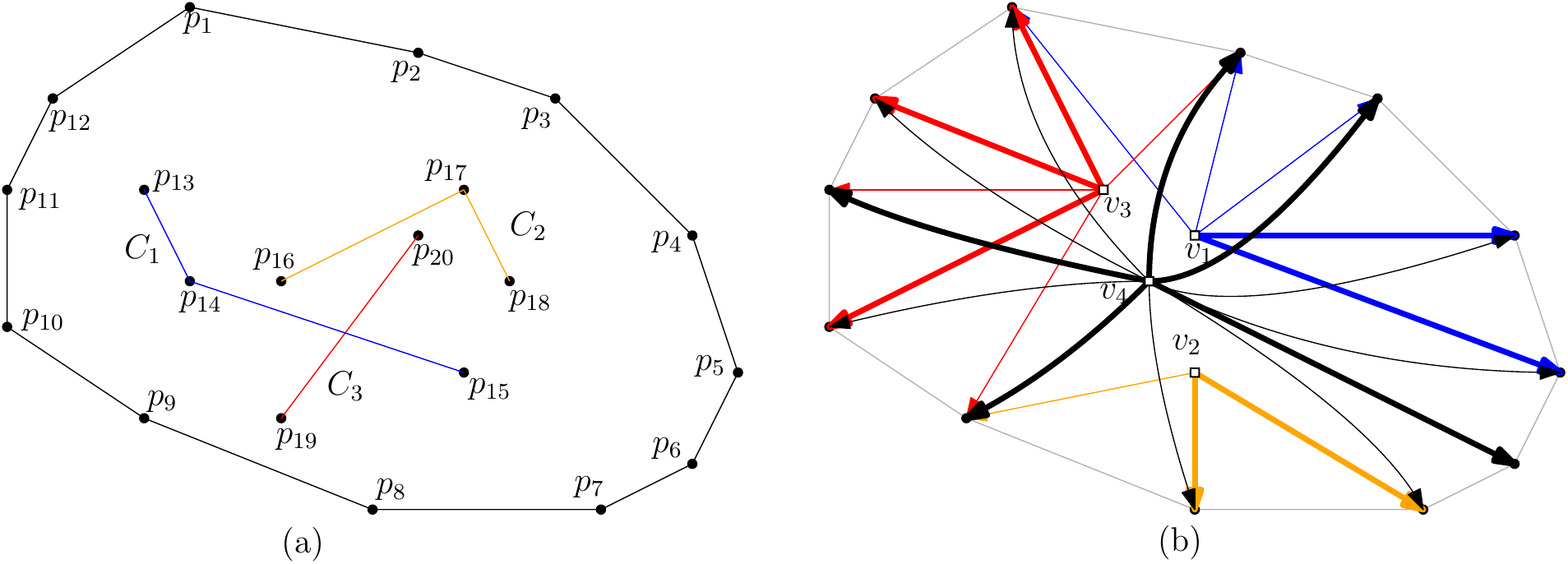}
\caption{Illustration for the \maxmincs problem with  eight points inside the convex hull of $P$. For the convex set corresponding to $v_3$, we have assigned the left halfplane of the line through $p_{19}$ and $p_{20}$. The edges carrying the flow are shown in thick edges.}
\label{fig:fpt}. 
\end{figure} 

Assume that $j$ of the $k$ convex sets contain points from the interior. Since there are only $r$ interior points, we  enumerate for each $j$ from 0 to $r$, all possible $j$ convex sets such that each set in these $j$ convex sets contains at most $q$ points from the interior of $P$. For each set of length $\ell  \le r$, we also consider all $\ell$ possible convex orderings of the points.  Figure~\ref{fig:fpt}(a) illustrates such a set of $j=3$ convex sets $C_1,C_2,C_3$ with a particular ordering of the points. Therefore, we have $\sum_{j=0}^k r {{2^r}\choose j}$ possibilities to consider. We need an additional consideration when all the points of a convex set lie on a straight line $L$. In that situation, we enumerate two further cases one that considers the left halfplane and the other that considers the right halfplane of $L$.  Thus the number of elements in the enumeration is at most   $\sum_{j=0}^k r {{2^r}\choose j}2^j \le \sum_{j=0}^k r 2^{r^{j+1}} \le  r 2^{r^{k+2}} $.

The idea is to examine whether these $j$ sets can be extended to contain $q$ points each and to check whether  the remaining points can be used to construct the remaining $(k-j)$ convex sets. To verify this, we construct a bipartite graph $G$ with vertex set $A\cup B$. The set $A$ contains $j$ vertices $v_1,\ldots,v_j$  corresponding to the sets $C_1,\ldots,C_j$ and $(k-j)$ additional vertices representing the remaining $(k-j)$ sets (which are currently empty) to be constructed. The set $B$ consist of $(n-r)$  vertices each corresponding to a distinct point on the convex hull of $P$. We add a directed edge from a vertex $v$ in $A$ to a vertex $w$ in $B$ if the point $w$ together with the  interior points corresponding to $v$ form a convex set. For the case when the interior points corresponding to $v$ form a straight line (e.g., $C_3$ in Figure~\ref{fig:fpt}(a)), we connect $v$ to the points of $B$ that lie on the halfplane assigned to $v$. Figure~\ref{fig:fpt}(b) illustrates the resulting graph.

We now consider a maximum flow on this graph where each vertex $v_i$ in $A$ has a production of $(q-|C_i|)$ units of flow and each sink can consume at most 1 unit of flow.  A maximum flow of $\sum_{j=0}^k (q-|C_i|)$ units indicates that the guess $q$ is feasible, and we continue the binary search by guessing a higher value. Otherwise, we search by guessing a lower value.

Hence the overall time complexity becomes $O(f(r)\cdot g(n)\log n)$, where $f(r) \in O(r 2^{r^{k+2}})$, $g(n)$ is the time required for the maximum flow algorithm, and the $\log n$ term corresponds to the binary search.
\end{proof}

\section{Discussion}
We examined the \maxmincs problem of selecting $k$ mutually disjoint convex subsets of a given set of points $P$ such that the cardinality of the smallest set is maximized.  
We described an algorithm that solves \maxmincs for small values of $k$, showed \maxmincs is NP-hard for an arbitrary $k$, and proved \maxmincs to be fixed parameter tractable when parameterized by the number of points interior to the convex hull.  
The problem is also solvable in polynomial time for specific large values of $k$.
If $k > n/4$, then some subset has cardinality at most three; a solution is found trivially by arbitrarily partitioning $P$ into $k$ subsets of size $\lfloor n/k\rfloor$ or $\lceil n/k\rceil$.
If $k \in \{\lfloor n/5\rfloor +1, \ldots, n/4\}$ then some subset has cardinality at most four. 
As discussed in Section~\ref{sec:related},
K\'{a}rolyi~\cite{DBLP:journals/dam/Karolyi01} characterized necessary and sufficient conditions for a set of $n$ points in general position to admit a partition into $k=n/4$ convex quadrilaterals, and gave an $O(n \log n)$-time algorithm to decide whether such a partition exists; if no such partition exists, then some set must contain at most three points, which can be solved as described above. It remains open to determine whether \maxmincs can be solved in polynomial time for all $k \in \Theta(n)$.

As a direction for future research, a natural question is to establish a good lower bound on the time required to solve these problems for small fixed values of $k$. In particular, is the $O(n^3)$-time algorithm of Chv\'{a}tal and Klincsek \cite{chvatal1980} optimal for the case $k=1$? Note that our algorithm has time $O(n^8)$ when $k=1$. 
It would also be interesting to examine whether a fixed-parameter tractable algorithm exists for \maxmincs when parameterized by $k$, and to find approximation algorithms for \maxmincs when $k$ is an arbitrary input parameter, with running time polynomial in $n$ and $k$.

\bibliographystyle{splncs04}
\bibliography{ref2}
\end{document}